\def\BibTeX{{\rm B\kern-.05em{\sc i\kern-.025em b}\kern-.08em
    T\kern-.1667em\lower.7ex\hbox{E}\kern-.125emX}}
\newtheorem{pro}{Proposition}[section]
\newtheorem{thm}[pro]{\bf Theorem}
\newtheorem{rem}[pro]{\bf Remark}
\newtheorem{ex}[pro]{\bf Example}
\newtheorem{definition}{Definition}
\newenvironment{proof}{{\bf Proof:}}{\hfill$\square$}
\begin{document}
\title{4-Cycle Free Spatially Coupled LDPC Codes with an Explicit Construction}
\author{Author 1, Author 2, Author 3}
\author{Zeinab Roostaie\thanks{Z. Roostaie is a Ph. D student with the Department of Mathematical Sciences, Shahrekord University, P. O. Box 8818634141, Shahrekord, IRAN}, Mohammad Gholami\thanks{Corresponding author: M. Gholami is a Professor in Mathematics with the Department of Mathematical Sciences, Shahrekord University, P. O. Box 8818634141, Shahrekord, IRAN}, Farzad Parvaresh\thanks{F. Parvaresh is an Associate Professor in Department of Electrical Engineering University of Isfahan, E-mail: f.parvaresh@eng.ui.ac.ir}}
%

\maketitle

\begin{abstract}
Spatially coupled low-density parity-check (SC-LDPC) codes are a class of capacity approaching LDPC codes
with low message recovery latency when a sliding window decoding is used. In this paper, we first present a new method for the construction of a class of SC-LDPC codes by the incidence matrices of a given non-negative integer matrix $E$, and then the relationship of 4-cycles between matrix $E$ and the corresponding SC-LDPC code are investigated. Finally, by defining a new class of integer finite sequences, called {\it good sequences}, for the first time, we give an explicit method for the construction of a class of 4-cycle free SC-LDPC codes that can achieve (in most cases) the minimum coupling width.
\end{abstract}

\begin{IEEEkeywords}
Spatially coupled LDPC codes, girth, coupling width, sliding window decoding.
\end{IEEEkeywords}

\section{Introduction}

\IEEEPARstart{A}as originally invented by Gallager in 1962, low-density parity-check (LDPC) codes is a type of error correction code with long block lengths which can be decoded efficiently using the low-complexity iterative belief-propagation (BP) decoding algorithms.
In general, LDPC codes consist of two categories: random/ pseudo-random LDPC codes \cite{S. Lin}, and the structured LDPC codes \cite{S. Lin}-\cite{G. Mitchell}. In the ensemble of structured LDPC codes, {\it cyclic} or {\it quasi-cyclic} (QC) LDPC codes have been widely used in many practical applications as they are much easier for hardware implementation~\cite{S. Lin}.


{\it Spatially coupled} (SC) LDPC codes \cite{origin}-\cite{S.Kudekar} can be viewed as a sequence of LDPC block codes whose graph representations are coupled together over time, resulting in a convolutional structure with block-to-block memory. A remarkable property of SC-LDPC codes, established numerically in \cite{Lentmaier} and analytically in \cite{Kudekar}, \cite{S.Kudekar}, is that asymptotically their iterative message passing decoding threshold is equal to the MAP decoding threshold of the underlying LDPC block code ensemble under certain conditions, a phenomenon known as {\it threshold saturation}. In other words, the (exponential complexity) MAP decoding performance of the underlying block code can be achieved by its coupled version with (linear complexity) message passing decoding.


One of the advantages of SC-LDPC codes is that in the coupling process a huge number of absorbing sets in the Tanner graph (TG) are broken \cite{D. G. Mitchell}. This is the main reason for the remarkable performance in waterfall and error floor regions of their bit error rate curves. Moreover, the memorized codeword structure and long coupling length make SC-LDPC more sensitive to short cycles. Small cycles and graphical structures like trapping sets and absorbing sets, which often contain short cycles, are known to have a strong influence on the error floor \cite{Karimi}. Constructing SC-LDPC convolutional codes (SC-LDPC-CCs) with the lowest constraint length and free of short cycles such as 4-cycles is a challenging problem.

In \cite{Mitchell}, a protograph-based method is used to construct some SC-LDPC codes with good performances, using first an {\it edge-spreading} procedure to couple together a chain of block code protographs, followed
by a {\it graph lifting} using permutation matrices. The decoding latency depends on the product of the {\it coupling width} and the {\it graphs lifting factor}, so codes with a small coupling width may result to have a small latency. In \cite{origin}, to find 4-cycle free SC-LDPC codes, a multi-stage design has been proposed based on a heuristic search and a check-and-flip process for the edge-spreading procedure by decomposing the fully-one base matrix $B$ into $w+1$ 4-cycle free component matrices. Moreover, they have employed an exhaustive search to find the minimum $w$, meeting the lower bound of the coupling widths.

In this paper, for a given matrix $E$ of non-negative integers, we deal with two constructions of SC-LDPC codes based on the incidence matrices of the elements of $E$, dependent on whether the indices of the considered incidence matrices are consecutive or not. Then, we pursue the existence of 4-cycles in the PCM of the constructed codes by some square sub-matrices in the primary matrix $E$ or its {\it representative block matrix}. The first representation inherits an explicit construction of 4-cycle free matrix $E$ based on newly defined sequences, called {\it good sequences} to generate a class of 4-cycle free SC-LDPC codes having minimum coupling widths (in most cases).


\section{The Preliminaries and Design Approaches}
Let $B = \left(b_{i,j}\right)_{p\times q}$, be a non-negative integer matrix with Tanner graph ${\rm TG}(B)$ containing {\it check nodes} $\{c_1,\cdots,c_p\}$ and {\it variable nodes} $\{v_1,\cdots,v_q\}$ associated to the rows and columns of $B$, respectively, for which $b_{i,j}$ determines the number of edges between $c_i$ and $v_j$.  Now, associated to $B$ with ${\cal G}={\rm TG}(B)$, the {\it spatially coupled (SC) LDPC} code with {\it base matrix} $B$ can be described as an {\it edge-spreading} procedure to couple together a chain of ${\cal G}$.
\begin{figure}[!t]
\centering
\includegraphics[width=2.5in]{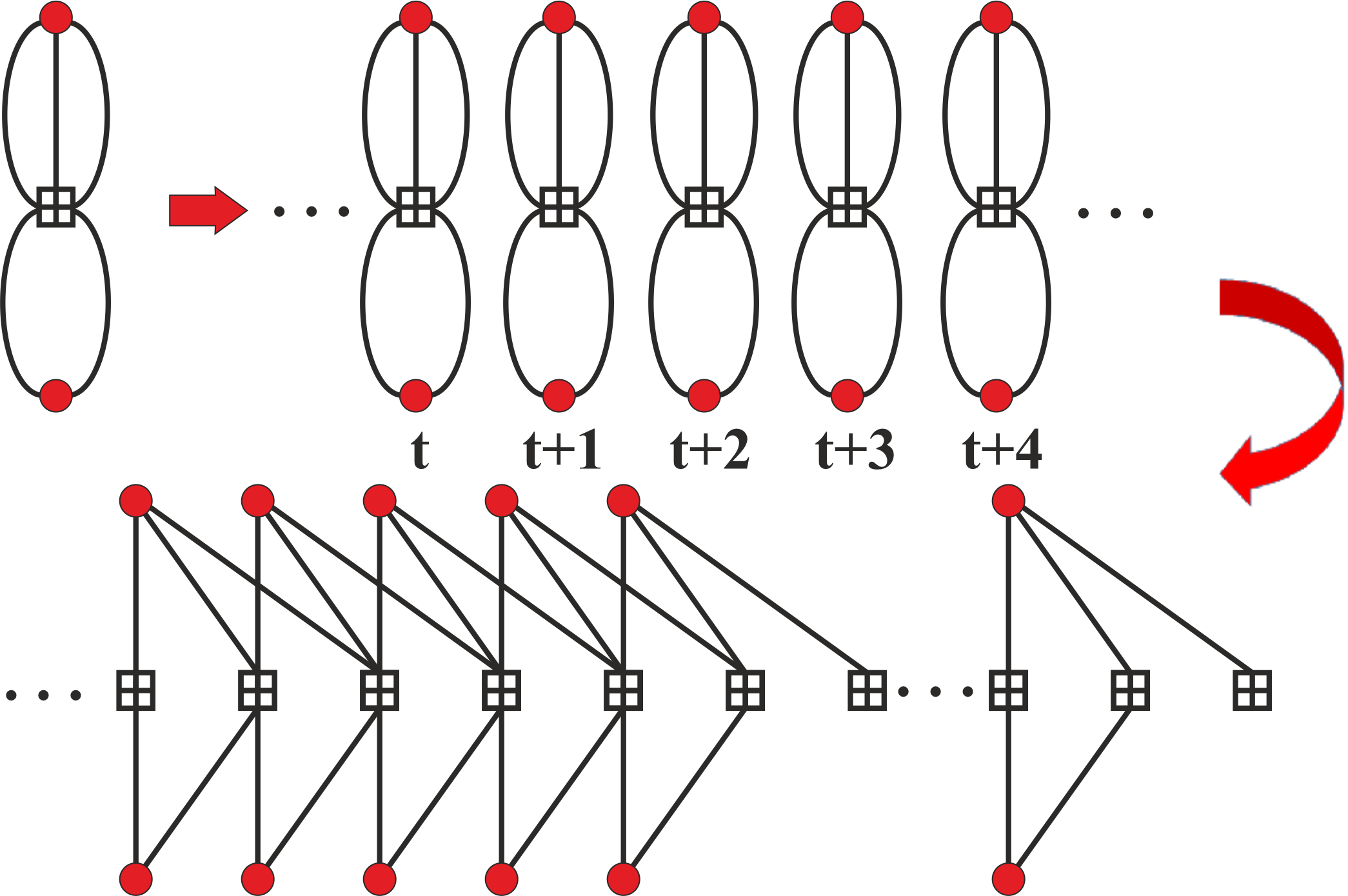}
\caption{Coupling process of the protograph associated to the base matrix $B=\left(3\quad2\right)$ }
\label{fig1}
\end{figure}
 %
In fact, the Tanner graph of the target code can be constructed by first replicating $\cal G$ as an infinite chain, and then spreading edges from the variable nodes of each copy of $\cal G$ by connecting them to the check nodes at most $w$ copies of $\cal G$ after it, i.e $t$, $t+1$, $\cdots$, $t+w$, where $w$ is called {\it coupling width}. For example, Fig.~\ref{fig1} shows the process of constructing the Tanner graph of an SC-LDPC code with coupling width $w=2$, by coupling TG$(B)$, in which $B = \left(3\quad 2\right)$.

It can easily be seen that the above coupling process to construct an SC-LDPC code with the base matrix $B$ corresponds to the decomposition of $B$ into $w + 1$ {\it component matrices} $B_0, B_1, \cdots, B_w$ of the same size, such that $B=\sum_{i=0}^{w}B_i$, therefore this construction is not unique in general, and strongly depends on the corresponding decomposition. For example, in Fig.~\ref{fig1}, we have $B=B_0+B_1+B_2$, in which $B_0=B_1=\left(1\quad 1\right)$, and $B_2=\left(1\quad0\right)$.

For the base matrix $B$ with decomposition $B=\sum_{i=0}^{w}B_i$, it can be seen that the PCM of the corresponding SC-LDPC code is as follows:
\begin{equation}\label{eq0}
{\cal H}(B_0,\cdots,B_w)=\small\left(\begin{array}{ccccc}
                    B_{0} &  &  \\
                    B_1 & B_0 &  \\
                    \vdots & \vdots & \ddots \\
                    B_w & B_{w-1} & \ddots \\
                     & B_w &\ddots  \\
                     &  &\ddots \\
                      &  & \\
                  \end{array}\right)
\end{equation}
In practice, an SC-LDPC code is terminated after a finite number $L$ copies of $\cal G$, where $L$ is called the {\it coupling length}. Then, the following matrix ${\cal H}^{(L)}$ can be considered as the PCM of the SC-LDPC code with coupling length $L$.
$$
{\cal H}^{(L)}(B_0,\cdots,B_w)=\small\left(\begin{array}{ccccc}
                    B_{0} &  &  \\
                    B_1 & B_0 &  \\
                    \vdots & \vdots & \ddots \\
                    B_w & B_{w-1} & \ddots &B_0\\
                     & B_w &\ddots &B_1 \\
                     &  &\ddots &\vdots \\
                      &  & &B_w\\
                  \end{array}\right)_{(L+w)p\times Lq}
$$

It is important to note that the first and last $w p$ check nodes have a reduced degree, that is, terminating the base graphs causes a small amount of disorder at both ends of the graph, which is a special feature that affects the saturation threshold of SC-LDPC codes ~\cite{Mitchell}, \cite{Lentmaier}-\cite{S.Kudekar}.


The constraint length and rate of the SC-LDPC code with PCM ${\cal H}^{(L)}$ is $v=p(w+1)$ and
$R^{(L)}=1-\frac{(L+w)p}{Lq}$, respectively, and the approximate rate is given by ${\lim}_{L\to\infty}R^{(L)}\triangleq R^{\infty}=1-\frac{p}{q}$.
$H^{(L)}$  defines a particular SC-LDPC code, whose girth (denoted by $g({\cal H}^{(L)})$) is the length of the shortest cycle in the corresponding Tanner graph. 




\section{A Deterministic Approach to Construct a Class of SC-LDPC Codes}
 For positive integers $p,q$, $p<q$, let $E=(e_{i,j})_{p\times q}$ be a matrix with entries belonging to the set of non-negative integers and ${\cal E}=\{e_{i,j}: 1\le i\le p, 1\le j\le q\}$ be the {\it set of all elements (SOE)} of the matrix $E$. Associated to each element $e\in{\cal E}$, the {\it incidence matrix} ${\cal M}_e$ can be defined as the $p\times q$ binary matrix ${\cal M}_e$ whose $(i,j)$the entry ${\cal M}_e(i,j)$ is non-zero, if and only if $e_{i,j}=e$. For the case that $e\not\in{\cal E}$, we set ${\cal M}_e$ to be the $p\times q$ zero matrix.
For example, associated with the matrix
\begin{equation}\label{eq2}
E=\left(\begin{array}{cccc}
3 & 0 & 1 & 3
\\
 4 & 3 & 3 & 0
\\
 4 & 0 & 5 & 5
\end{array}\right)
\end{equation}
we have the following incidence matrices:
$$\begin{array}{l}
{\cal M}_0=\left(\begin{array}{cccc} 0100\\ 0001\\ 0100\end{array}\right),
{\cal M}_1=\left(\begin{array}{cccc} 0010\\ 0000\\ 0000\end{array}\right),
{\cal M}_2=\left(\begin{array}{cccc} 0000\\ 0000\\ 0000\end{array}\right),\vspace{1mm}\\
{\cal M}_3=\left(\begin{array}{cccc} 1001\\ 0110\\ 0000\end{array}\right),
{\cal M}_4=\left(\begin{array}{cccc} 0000\\ 1000\\ 1000\end{array}\right),
{\cal M}_5=\left(\begin{array}{cccc} 0000\\ 0000\\ 0011\end{array}\right)
\end{array}
$$
\begin{definition}
Matrix $E$ is called {\it 4-cycle free} if
$e_{i_2,j_1}-e_{i_1,j_1}\neq e_{i_1,j_2}-e_{i_2,j_2}$, for each $1\le i_1<i_2\le p$ and $1\le j_1<j_2\le q$.
\end{definition}

 In fact, each 4-cycle free matrix $E$ can be considered as the exponent matrix of a QC-LDPC code with circulant permutation matrix (CPM) size $N$, where $N$ is large enough~\cite{M. Karimi}.

Associated to a matrix $E=(e_{i,j})_{p\times q}$, with SEO ${\cal E}$,
we follow a structure for the construction of PCM of a class of SC-LDPC codes.

{\bf Design Method.} Let $I=\{i_0,i_1,\cdots,i_w\}$, $i_0<i_1<\cdots<i_w$, be a set of non-negative integers containing $\cal E$. For the fully-1 $p\times q$ base matrix $B$, we follow the design approach by breaking $B$ to the $|I|=w+1$ component matrices ${\cal M}_{i}$, $i\in I$, such that $B=\sum_{i\in I}{\cal M}_{i}$. In fact, associated with $E$, we consider the incidence matrices $\{{\cal M}_i, i\in I\}=\{{\cal M}_{i_0},{\cal M}_{i_1},\cdots,{\cal M}_{i_w}\}$ as the target component matrices to define the PCM of the corresponding $(p,q)$ SC-LDPC code to be ${\cal H}(E)={\cal H}({\cal M}_{i_0},{\cal M}_{i_1},\cdots,{\cal M}_{i_w})$, as defined in Eq.~\ref{eq0}.


For positive integer $L$, ${\cal H}^{(L)}(E)={\cal H}^{(L)}({\cal M}_{i_0},{\cal M}_{i_1},\cdots,{\cal M}_{i_w})$ can be constructed from ${\cal H}(E)$ by terminating the matrix after $L$ steps which contain $Lq$ (first) variable nodes and $(L+i_w-i_0)p$ (first) check nodes, starting from the upper-left corner of ${\cal H}(E)$. The design rate of the code with PCM ${\cal H}^{(L)}(E)$ is at least $1-\frac{(L+i_w-i_0)p}{Lq}$ which tends to $1-\frac{p}{q}$, when $L$ enlarges.

Besides the rate, we need to analyze some other properties influencing the error-rate performances, such as the girth. Before, for integers $a,b$, $a<b$, we set $[a,b]$ to denote the set of all integers between $a$ and $b$, i.e. $[a,b]=\{a,a+1,\cdots,b\}$.
\begin{thm}
\label{thm0}
If $I=[i_0,i_w]$, then the SC-LDPC code with PCM ${\cal H}(E)$ is 4-cycle free if and only if $E$ is 4-cycle free.
\end{thm}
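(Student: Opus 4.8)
The plan is to make the adjacency of the Tanner graph of $\mathcal{H}(E)$ completely explicit and then read off $4$-cycles in both directions. First I would fix coordinates: write the variable nodes of $\mathcal{H}(E)$ as $v_j^{(c)}$, with block column $c\ge1$ and in-block index $1\le j\le q$, and the check nodes as $u_i^{(r)}$, with block row $r\ge1$ and in-block index $1\le i\le p$. Since $I=[i_0,i_w]$ is an interval, $i_k=i_0+k$, so by the form of $\mathcal{H}(E)$ in Eq.~\ref{eq0} its $(r,c)$ block equals the incidence matrix $\mathcal{M}_{i_0+(r-c)}$ when $0\le r-c\le w$ and the zero block otherwise; note also $e_{i,j}-i_0\in[0,w]$ for every $i,j$ because $\mathcal{E}\subseteq I$. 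Unwinding the definition of $\mathcal{M}_e$ then yields the \emph{edge rule}: $u_i^{(r)}$ and $v_j^{(c)}$ are adjacent in $\mathcal{H}(E)$ if and only if $r-c=e_{i,j}-i_0$; in particular, for fixed $c,i,j$ the block row $r$ is uniquely determined.

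For the implication that $4$-cycle freeness of $E$ forces that of $\mathcal{H}(E)$, I would argue the contrapositive. Suppose $\mathcal{H}(E)$ has a $4$-cycle; it visits two distinct variable nodes $v_{j_1}^{(c_1)},v_{j_2}^{(c_2)}$ and two distinct check nodes $u_{i_1}^{(r_1)},u_{i_2}^{(r_2)}$, say in cyclic order $v_{j_1}^{(c_1)},u_{i_1}^{(r_1)},v_{j_2}^{(c_2)},u_{i_2}^{(r_2)}$. Writing the edge rule for the four edges and eliminating $r_1,r_2$ leaves the single relation $c_2-c_1=e_{i_1,j_1}-e_{i_1,j_2}=e_{i_2,j_1}-e_{i_2,j_2}$. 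Before this can contradict $4$-cycle freeness of $E$ I must check $i_1\ne i_2$ and $j_1\ne j_2$: if $i_1=i_2$ the edge rule forces $r_1=c_1+e_{i_1,j_1}-i_0=c_1+e_{i_2,j_1}-i_0=r_2$, so the two check nodes coincide, a contradiction; and if $j_1=j_2$ the displayed relation gives $c_1=c_2$, so the two variable nodes coincide, again a contradiction. Hence $i_1\ne i_2$, $j_1\ne j_2$, and after relabeling so that $i_1<i_2$, $j_1<j_2$, the relation is exactly the equality that a $4$-cycle free $E$ is required to avoid; thus $E$ is not $4$-cycle free.

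For the converse, assume $E$ is not $4$-cycle free, so some $i_1<i_2$ and $j_1<j_2$ realize the forbidden equality; rewriting it, set $d:=e_{i_1,j_1}-e_{i_1,j_2}=e_{i_2,j_1}-e_{i_2,j_2}$. Pick block columns $c_1,c_2\ge1$ with $c_2-c_1=d$ (possible for any integer $d$) and put $r_1=c_1+e_{i_1,j_1}-i_0$ and $r_2=c_1+e_{i_2,j_1}-i_0$; both are $\ge c_1\ge1$ since every entry of $E$ lies in $[i_0,i_w]$. A short computation with the edge rule and the two expressions for $d$ shows the offsets $r_1-c_1,\ r_1-c_2,\ r_2-c_1,\ r_2-c_2$ equal $e_{i_1,j_1}-i_0,\ e_{i_1,j_2}-i_0,\ e_{i_2,j_1}-i_0,\ e_{i_2,j_2}-i_0$ respectively, all in $[0,w]$, so the pairs $v_{j_1}^{(c_1)}u_{i_1}^{(r_1)}$, $u_{i_1}^{(r_1)}v_{j_2}^{(c_2)}$, $v_{j_2}^{(c_2)}u_{i_2}^{(r_2)}$, $u_{i_2}^{(r_2)}v_{j_1}^{(c_1)}$ are all edges; since $j_1\ne j_2$ and $i_1\ne i_2$ these four nodes are distinct, so they form a $4$-cycle of $\mathcal{H}(E)$.

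The only genuinely delicate point, I expect, is the degenerate-case check in the direction from a $4$-cycle of $\mathcal{H}(E)$ to one of $E$: because the Tanner graph of $\mathcal{H}(E)$ is infinite, a $4$-cycle could a priori re-use a within-block position ($j_1=j_2$ or $i_1=i_2$), a configuration not covered by the definition of $4$-cycle freeness of $E$ (which constrains only pairs of distinct rows and distinct columns), and this must be excluded before the algebraic relation can be translated back to $E$. This is also exactly where the hypothesis $I=[i_0,i_w]$ is used: consecutiveness is what makes the $(r,c)$ block equal to the single incidence matrix $\mathcal{M}_{i_0+(r-c)}$, hence makes $r$ a function of $(c,i,j)$ in the edge rule; for non-consecutive $I$ the block at offset $r-c$ is no longer determined, the edge rule becomes multivalued, and the clean correspondence breaks — presumably the reason the non-consecutive case is treated separately via the representative block matrix.
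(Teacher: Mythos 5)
Your proof is correct, but it is organized quite differently from the paper's. The paper argues only in the direction ``$E$ has a 4-cycle $\Rightarrow$ ${\cal H}(E)$ has a 4-cycle,'' by a case analysis on the shape of the offending $2\times 2$ submatrix ${\cal A}$ of $E$ (all four entries equal; two equal rows or two equal columns; all entries distinct), exhibiting in each case a block pattern of incidence matrices $\left(\begin{smallmatrix}{\cal M}_{b+h} & {\cal M}_b\\ {\cal M}_{b+h+h'}&{\cal M}_{b+h'}\end{smallmatrix}\right)$ (or a one-dimensional stack of two ${\cal M}$'s) that sits inside ${\cal H}(E)$ and carries a 4-cycle; the converse implication is left implicit there and is only spelled out later, in the proof of Theorem~\ref{thm2}, by enumerating where a 4-cycle of ${\cal H}(E)$ can live. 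You instead derive a single edge rule $r-c=e_{i,j}-i_0$ for the Tanner graph and push the algebra through uniformly, which buys you two things: you avoid the case split entirely (the paper's Case~3 normalization $b<a<d<c$ does not literally cover patterns such as $\left(\begin{smallmatrix}a&b\\c&a\end{smallmatrix}\right)$ with $2a=b+c$, whereas your argument needs no such normalization), and you actually prove the ``only if'' direction, including the exclusion of the degenerate configurations $i_1=i_2$ and $j_1=j_2$, which the paper never addresses. One cosmetic remark: the equality you extract, $e_{i_1,j_1}-e_{i_1,j_2}+e_{i_2,j_2}-e_{i_2,j_1}=0$, agrees with the condition used inside the paper's proof but differs by a sign from Definition~1 as printed; that discrepancy is in the paper, and your version is the intended (standard) one. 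Your closing remark about why consecutiveness of $I$ matters is essentially right, though strictly the edge rule stays single-valued for non-consecutive $I$ as well (the offset becomes the position of $e_{i,j}$ in $I$ rather than $e_{i,j}-i_0$); what breaks is the affine relation between offsets and entries of $E$, which is what the representative block matrix is introduced to repair.
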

\begin{proof}
The existence of a 4-cycle in $E=(e_{i,j})_{p\times q}$ corresponds to a $2\times 2$ sub-matrix
${\cal A}=\left(\begin{array}{cc} e_{i_1,j_1} & e_{i_1,j_2} \\ e_{i_2,j_1} & e_{i_2,j_2} \end{array}\right)$ of $E$, in which $e_{i_1,j_1}-e_{i_1,j_2}+e_{i_2,j_2}-e_{i_2,j_1}=0$. \\
For this case, one of the following cases may have occurred:

{Case 1.} ${\cal A}=\left(\begin{array}{cc} a & a \\ a& a \end{array}\right)$, for some $a\in{\cal E}$. In this case, we have a 4-cycle in the incidence matrix ${\cal M}_a$, because ${\cal M}_a(i_1,j_1)={\cal M}_a(i_1,j_2)={\cal M}_a(i_2,j_1)={\cal M}_a(i_2,j_2)=1$, which induces a 4-cycle in the Tanner graph of ${\cal H}_1(E)$.

{Case 2.} If ${\cal A}$ has one the forms $\left(\begin{array}{cc} a & a \\ b & b \end{array}\right)$ or
$\left(\begin{array}{cc} a & b \\ a & b \end{array}\right)$, $a,b\in{\cal E}$, then we have 4-cycles in $\left(\begin{array}{c}{\cal M}_a \\ {\cal M}_b\end{array}\right)$ or
$\left(\begin{array}{cc}{\cal M}_a & {\cal M}_b\end{array}\right)$, respectively. Because, for the first case, we have ${\cal M}_a(i_1,j_1)={\cal M}_a(i_1,j_2)=1$ and ${\cal M}_b(i_2,j_1)={\cal M}_b(i_2,j_2)=1$ and for the second case, we have ${\cal M}_a(i_1,j_1)={\cal M}_a(i_2,j_1)=1$ and ${\cal M}_b(i_1,j_2)={\cal M}_b(i_2,j_2)=1$.

{Case 3.} The elements of $\cal A$ are different. In this case, without loss of generality, we assume that ${\cal A}=\left(\begin{array}{cc}a&b\\c&d\end{array}\right)$, with $b<a<d<c$ and $c-a=d-b$. In this case, we have a 4-cycle in $\left(\begin{array}{cc}{\cal M}_{a} & {\cal M}_b\\ {\cal M}_{c}&{\cal M}_{d}\end{array}\right)=\left(\begin{array}{cc}{\cal M}_{b+h} & {\cal M}_b\\ {\cal M}_{b+h+h'}&{\cal M}_{b+h'}\end{array}\right)$ which clearly is a sub-matrix of ${\cal H}_1(E)$, where $h=a-b=c-d$ and $h'=d-b=c-a$. The 4-cycle is because of ${\cal M}_a(i_1,j_1)={\cal M}_b(i_1,j_2)={\cal M}_c(i_2,j_1)={\cal M}_d(i_2,j_2)=1$.
\end{proof}
\begin{ex}
The matrix $E$ given by Eq.~\ref{eq2} is 4-cycle free and it can be considered as the exponent matrix of a QC-LDPC code with girth 6 and CPM-size $N=7$. Then, we have the following PCM of an SC-LDPC code with girth 6.
$$\small\begin{array}{l}
{\cal H}_1(E)=\\\\
\left(
                  \begin{array}{ccc}
                    {\cal M}_0 &  &  \\
                    {\cal M}_1 & {\cal M}_0 &  \\
                    {\cal M}_2 & {\cal M}_1 &  \\
                    {\cal M}_3 & {\cal M}_2 & \ddots \\
                    {\cal M}_4 & {\cal M}_3 & \ddots \\
                    {\cal M}_5 & {\cal M}_4 & \ddots \\
                         & {\cal M}_5 & \ddots \\
                  \end{array}
\right)=\end{array}
{\scriptsize\left(
                  \begin{array}{lll}
                    \begin{array}{p{.1mm}p{.1mm}p{.1mm}p{.1mm}} .&1&.&.\\ .&.&.&1\\ .&1&.&.\end{array}\\
                    \begin{array}{p{.1mm}p{.1mm}p{.1mm}p{.1mm}} .&.&1&.\\ .&.&.&.\\ .&.&.&.\end{array}&
                    \begin{array}{p{.1mm}p{.1mm}p{.1mm}p{.1mm}} .&1&.&.\\ .&.&.&1\\ .&1&.&.\end{array}\\
                    \begin{array}{p{.1mm}p{.1mm}p{.1mm}p{.1mm}} .&.&.&.\\ .&.&.&.\\ .&.&.&.\end{array}&
                    \begin{array}{p{.1mm}p{.1mm}p{.1mm}p{.1mm}} .&.&1&.\\ .&.&.&.\\ .&.&.&.\end{array}\\
                    \begin{array}{p{.1mm}p{.1mm}p{.1mm}p{.1mm}} 1&.&.&1\\ .&1&1&.\\ .&.&.&.\end{array}&
                    \begin{array}{p{.1mm}p{.1mm}p{.1mm}p{.1mm}} .&.&.&.\\ .&.&.&.\\ .&.&.&.\end{array}& \ddots \\
                    \begin{array}{p{.1mm}p{.1mm}p{.1mm}p{.1mm}} .&.&.&.\\ 1&.&.&.\\ 1&.&.&.\end{array}&
                    \begin{array}{p{.1mm}p{.1mm}p{.1mm}p{.1mm}} 1&.&.&1\\ .&1&1&.\\ .&.&.&.\end{array}& \ddots \\
                    \begin{array}{p{.1mm}p{.1mm}p{.1mm}p{.1mm}} .&.&.&.\\ .&.&.&.\\ .&.&1&1\end{array}&
                    \begin{array}{p{.1mm}p{.1mm}p{.1mm}p{.1mm}} .&.&.&.\\ 1&.&.&.\\ 1&.&.&.\end{array} & \ddots \\
                    &\begin{array}{p{.1mm}p{.1mm}p{.1mm}p{.1mm}} .&.&.&.\\ .&.&.&.\\ .&.&1&1\end{array} & \ddots \\
                  \end{array}
                \right)}
$$
\end{ex}

In Theorem~\ref{thm0}, if $I=[a,b]$ is the least interval containing the SOE of $E$, then ${\cal H}(E)$ can be used as the PCM of a 4-cycle free SC-LDPC code with coupling width $w=b-a$. Moreover, because of the direct role of matrices having property 4-cycle free in constructing SC-LDPC codes with girth 6, it is enough to focus our attention on the construction of these primary matrices. For example, some 4-cycle free matrices in \cite{M. Karimi} have been considered as the exponent matrices of some QC-LDPC codes with girth 6. In fact, the authors have proposed some 4-cycle free matrices $E=(e_{i,j})_{p\times q}$ with $e_{i,j}=(i-1)(j-1)\bmod n$, in which $n$ is the smallest prime greater than $q$. Then, for some $p,q$, $2\le p\le 5$ and $p<q\le 14$, Table~\ref{tab1} has provided the coupling widths of the codes in \cite{M. Karimi} in a column labeled with $w_2$.

Applying the above approach to generate $(p,q)$ SC-LDPC codes with girth 6, it is desirable to identify what is the minimum $w$ required to achieve girth 6, since we will typically want small $w$ to minimize latency when using sliding window (SW) decoding. As shown in Table~\ref{tab1}, for some different $p,q$, the authors in \cite{origin} have tried to find the minimum possible $w$ by breaking $B$ to $w+1$ {\it component matrices} such that the {\it excluded patterns} in the PCM of the final SC-LDPC code are 4-cycle free. The approach in \cite{origin}, however, is not efficient, because the number of the {excluded patterns} enlarges significantly by increasing the size of the base matrix $B$. Moreover, the {component matrices} are constructed {randomly} which are modified by a {\it check-and-flip} process (when component matrices or excluded patterns contain 4-cycles) which imposes an additional complexity to the algorithm for construction of the desired component matrices.

Here, for the first time, we propose an explicit method to generate the component matrices such that the final SC-LDPC codes are 4-cycle free. In fact, the component matrices are constructed by some new well-defined {\it good sequences} which are generated by a {\it depth first} algorithm. The outputs show that the final SC-LDPC codes in most cases match the minimum bound latency $w$ reported in Lemma 4 of \cite{lowbound}.
\begin{definition}
\label{def1}
Let $p,q,l$ be positive integers with $p<q$. A finite non-negative integer sequence $(a_n^{(p, q)})_{1\le n\le p+q-1}$ is called a {\it good sequence}, if the elements are satisfied in the following condition:
\begin{align}\label{eq3}
  a_{n_1-n_2+n_3}^{(p, q)}+a_{n_1}^{(p, q)}-a_{n_2}^{(p, q)}+a_{n_3}^{(p, q)}\neq0
\end{align}
 for each $1\le n_1<n_2<n_3\le p+q-1$, $n_2-n_1<p$ and $n_3-n_2<q$. In this case, $w=\max\{a_n^{(p, q)}: 1\le n\le p+q-1\}$ is called the maximum of elements (MOE) of the sequence.
\end{definition}

For example, for $(p,q)=(5,10)$, the finite sequence $(a_n^{(5,10)})_{1\le n\le 14}=(6, 5, 3, 6, 1, 1, 7, 3, 0, 2, 7, 6, 7, 0)$ is a good sequence of length $p+q-1=14$ with MOE $w=7$. For given $p,q$, $p<q$ and (enough large) positive integer $w$, Algorithm~\ref{alg1} use a back-tracking depth-first approach to find a good sequence.
In continue, we show that each good sequence with MOE $w$ can be used to generate a 4-cycle free SC-LDPC code with coupling width $w$.
\begin{thm}
\label{thm1}
 For given positive integers $p,q$, $p<q$, let $(a_n^{(p, q)})_{1\le n\le p+q-1}$ be a {\it good sequence} and $E$ be the matrix $(e_{i,j})_{p\times q}$, for which $e_{i,j}=a_{j-i+p}^{(p,q)}$. Then, $E$ is a 4-cycle free matrix.
 \end{thm}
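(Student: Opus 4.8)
The plan is to argue by contradiction: supposing $E$ has a $4$-cycle, I will produce a triple $(n_1,n_2,n_3)$ violating \eqref{eq3}, contradicting that $(a_n^{(p,q)})_{1\le n\le p+q-1}$ is a good sequence. First I would record the structure of $E$: since $e_{i,j}=a_{j-i+p}^{(p,q)}$ depends on $(i,j)$ only through $j-i$, the matrix is constant along each diagonal, and as $(i,j)$ ranges over $[1,p]\times[1,q]$ the subscript $j-i+p$ ranges over exactly $[1,p+q-1]$, so every subscript appearing below is a legal index of the sequence. Recall from the proof of Theorem~\ref{thm0} that a $4$-cycle of $E$ is a $2\times2$ submatrix on rows $i_1<i_2$ and columns $j_1<j_2$ with $e_{i_1,j_1}-e_{i_1,j_2}+e_{i_2,j_2}-e_{i_2,j_1}=0$; substituting the diagonal form turns this into an additive relation among $a_{j_1-i_1+p}^{(p,q)}$, $a_{j_2-i_1+p}^{(p,q)}$, $a_{j_2-i_2+p}^{(p,q)}$ and $a_{j_1-i_2+p}^{(p,q)}$, whose subscripts satisfy $(j_1-i_2+p)+(j_2-i_1+p)=(j_1-i_1+p)+(j_2-i_2+p)$, i.e., they form an additive quadruple.

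The heart of the argument is to read off the triple that \eqref{eq3} applies to. Put $n_1:=j_1-i_2+p$, $n_2:=j_1-i_1+p$ and $n_3:=j_2-i_1+p$; then the fourth subscript is $j_2-i_2+p=n_1-n_2+n_3$. Since $n_2-n_1=i_2-i_1$ and $n_3-n_2=j_2-j_1$, the hypotheses $1\le i_1<i_2\le p$ and $1\le j_1<j_2\le q$ give immediately $1\le n_1<n_2<n_3\le p+q-1$, $n_2-n_1<p$ and $n_3-n_2<q$, and moreover $n_1<n_1-n_2+n_3<n_3$ so that subscript is legal as well. Hence $(n_1,n_2,n_3)$ meets every hypothesis of Definition~\ref{def1}, so \eqref{eq3} holds for it; but, after collecting the four terms (and, if needed, multiplying through by $-1$), the combination appearing in \eqref{eq3} for this triple is exactly $e_{i_1,j_1}-e_{i_1,j_2}+e_{i_2,j_2}-e_{i_2,j_1}$, which was assumed to vanish — a contradiction. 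Therefore $E$ is $4$-cycle free.

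The main point needing care is the labelling in the previous paragraph: $n_2$ must be chosen to be the middle subscript $j_1-i_1+p$, since only then does the gap that Definition~\ref{def1} bounds by $p$ land on the row difference $i_2-i_1\le p-1$ and the gap bounded by $q$ land on the column difference $j_2-j_1\le q-1$; the alternative choice $n_2=j_2-i_2+p$ would instead force the requirement $j_2-j_1<p$, which may fail when $q>p$. One should also note the harmless degenerate case $i_2-i_1=j_2-j_1$, in which $n_1-n_2+n_3=n_2$: Definition~\ref{def1} restricts only $n_1<n_2<n_3$ and not the value of $n_1-n_2+n_3$, so \eqref{eq3} still applies and the conclusion stands.
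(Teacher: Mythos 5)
Your argument is correct and essentially identical to the paper's proof: the same choice $n_1=j_1-i_2+p$, $n_2=j_1-i_1+p$, $n_3=j_2-i_1+p$ with $n_4=n_1-n_2+n_3=j_2-i_2+p$, and the same verification that $n_2-n_1=i_2-i_1<p$ and $n_3-n_2=j_2-j_1<q$ so that Definition~\ref{def1} applies; the paper merely argues contrapositively instead of by contradiction. The only caveat (which applies equally to the paper's own proof) is that the $4$-cycle relation $a_{n_1}^{(p,q)}-a_{n_2}^{(p,q)}+a_{n_3}^{(p,q)}-a_{n_1-n_2+n_3}^{(p,q)}=0$ matches the left-hand side of \eqref{eq3} only after correcting the sign of the term $a_{n_1-n_2+n_3}^{(p,q)}$ in Definition~\ref{def1} (an apparent typo there, consistent with what Algorithm~\ref{alg1} actually enforces) --- no overall factor of $-1$ reconciles the two expressions as printed.
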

\begin{proof}
 To prove the claim, it is sufficient to show $e_{i_2,j_1}-e_{i_1,j_1}\neq e_{i_1,j_2}-e_{i_2,j_2}$, for each $1\le i_1<i_2\le p$ and $1\le j_1<j_2\le q$. By setting $e_{i,j}=a_{j-i+p}^{(p,q)}$, this is equivalent to show $a_{n_1}^{(p,q)}-a_{n_2}^{(p,q)}\ne a_{n_3}^{(p,q)}-a_{n_4}^{(p,q)}$, for $n_1=j_1-i_2+p$, $n_2=j_1-i_1+p$, $n_3=j_2-i_1+p$ and $n_4=j_2-i_2+p$. However, $1\le n_1<n_2<n_3\le p+q-1$, $n_4=n_1-n_2+n_3$, $n_2-n_1=i_2-i_1<p$ and $n_3-n_2=j_2-j_1<q$. Then, the condition $a_{n_4}^{(p,q)}+a_{n_1}^{(p,q)}- a_{n_2}^{(p,q)}+a_{n_3}^{(p,q)}\neq0$ is hold from Definition~\ref{def1} and the proof is completed.
 \end{proof}
\medskip

In generating good sequences, a challenging problem is finding the sequences with the smallest MOEs, which (according to Theorem~\ref{thm1}) is equivalent to finding SC-LDPC codes with the smallest coupling widths. Then, in Table \ref{tab1}, we have tried to find such good sequences, labeled with $(a_n)$, having the smallest MOE $w_3$. In fact, as the outputs show, our results (in most cases) match well with Lemma 4 of \cite{lowbound}, which gives a lower-bound on $w$ needed to ensure a 4-cycle-free SC LDPC code. On the other hand, MOE's are the same with the minimum coupling width $w_1$ of the codes in \cite{origin}, just for $(p,q)=(5,10)$ ours is better. Besides, because of the high complexity of the approach used in \cite{origin}, there is not any report for $w_1$, especially when $p+q\ge 15$, while the outputs of Algorithm 1 meet the minimum values up to $q=16$ and other values also can be easily obtained even for larger $p$ and $q$ not reported in the table.
\begin{ex}
\label{ex1}
For $(p,q)=(3,6)$, the finite sequence $(0,3,2,0,0,1,3,0)$ is a good sequence with MOE $w=3$. Then, Theorem~\ref{thm1} can be used to construct the following 4-cycle free matrix $E$.
\begin{equation}\label{eq1}
E= \left(\begin{array}{cccccc}
2 & 0 & 0 & 1 & 3 & 0
\\
 3 & 2 & 0 & 0 & 1 & 3
\\
 0 & 3 & 2 & 0 & 0 & 1
\end{array}\right)
\end{equation}
Now, applying Theorem~\ref{thm0}, the matrix ${\cal H}(E)$  is the PCM of a 4-cycle free SC-LDPC code with coupling width $w=3$.
\end{ex}


\begin{algorithm}
\caption{Generating good sequences}\label{alg1}
\begin{algorithmic}
\REQUIRE Let $w$, $p$ and $q$ be given with $p<q$.
\STATE $A\leftarrow\{0,1,\cdots,w\}$, $C_1=\{c_{1,0},\cdots,c_{1,w}\}\leftarrow A$, $n_4\leftarrow 2$
\STATE $a_1^{(p,q)}\leftarrow c_{1,0}$, $l\leftarrow p+q-1$, $q_1\leftarrow True$
\WHILE{$|C_1|>0$ and $n_4\le l$}
 \STATE $B\leftarrow\emptyset$
 \FOR {$n_1$ from 1 to $n_4-2$}
  \FOR {$n_3$ from $n_1+1$ to $n_4-1$}
   \STATE $n_2\leftarrow n_1-n_3+n_4$
   \IF {$n_4-n_2<p$ and $n_4-n_3<q$}
    \STATE $B\leftarrow B\bigcup\{a_{n_2}^{(p,q)}+a_{n_3}^{(p,q)}-a_{n_1}^{(p,q)}\}$
   \ENDIF
  \ENDFOR
 \ENDFOR
 \STATE $C_{n_4}\leftarrow A \setminus B$
 \WHILE {$n_4>0$ and $|C_{n_4}|=0$}
  \STATE $n_4\leftarrow n_4-1$
  \IF{$n_4>0$}
   \STATE $C_{n_4}\leftarrow C_{n_4}\setminus \{a_{n_4}^{(p,q)}\}$
  \ENDIF
 \ENDWHILE
 \IF{$n_4>0$}
  \STATE $v\leftarrow|C_{n_4}|-1$ and $C_{n_4}=\{c_{n_4,0},\cdots,c_{n_4,v}\}$
  \STATE $a_{n_4}\leftarrow c_{n_4,0}$ and $n_4\leftarrow n_4+1$
 \ENDIF
 \IF{$n_4=l+1$}
  \RETURN $(a_n^{(p,q)})_{1\le n\le l}$
 \ENDIF
\ENDWHILE
\end{algorithmic}
\end{algorithm}

\begin{table*}
\centering
\scriptsize
\caption{A comparison between the coupling widths of the constructed codes and the codes in \cite{M. Karimi,origin}\label{tab1}}
\begin{tabular}{cc}
\begin{tabular}{c|c|c|c|c|c}
  p&q& $w_1$ &$w_2$ & $w_{3}$ & $a_n$ \\
  \hline\hline
   &3 & 1 &2 & 1 & (0,0,1,0) \\
   &4 & 2 & 3& 2 & (0,0,1,0,2)\\
   &5 & 2 & 4& 2 & (0,0,1,0,2,0)\\
   &6 & 3 & 5& 3 & (0,1,1,3,2,0,3)\\
   &7 & 3 & 6& 3 & (0,0,1,3,2,0,3,0)\\
  2&8 & 4 & 7& 4 & (0,2,3,2,0,0,4,1,4)\\
   &9 & 4 & 8& 4 & (0,4,4,3,1,2,4,0,3,0)\\
   &10& 5 & 9& 5 & (0,0,1,5,2,4,0,3,2,0,5)\\
   &11& 5 & 10& 5 & (0,5,4,1,3,3,4,0,4,2,5,0)\\
   &12& 6 & 11& 6 & (0,2,0,5,1,1,0,3,4,1,5,0,6)\\
   &13& - &12 & 6 & (0,0,1,0,4,1,6,2,0,3,5,0,6,0)\\
   &14& - & 13& 7 & (0,6,7,1,5,0,7,3,2,4,7,7,4,2,7)\\\hline
   &4 & 2 & 4 & 2 & (0,2,2,0,1,0)\\
   &5 & 2 & 6 & 2 & (0,2,1,2,0,0,2)\\
   &6 & 3 & 6 & 3 & (0,2,3,0,3,1,0,0)\\
   &7 & 3 &10 & 3 & (0,3,1,0,0,2,3,0,3)\\
   &8 & 4 &10 & 4 & (0,1,3,2,0,4,4,0,3,0)\\
  3&9 & 4 & 10& 4 & (0,3,1,2,4,0,4,4,1,0,3)\\
   &10& 5 & 10& 5 & (0,1,4,2,1,5,1,3,0,5,5,0)\\
   &11& 5 & 12& 5 & (0,5,3,5,0,0,4,5,2,1,4,0,5)\\
   &12& 6 & 12& 6 & (0,4,1,0,3,5,6,2,0,6,6,1,6,0)\\
   &13& - & 16& 7 & (0,7,5,1,3,7,7,2,7,1,4,3,0,6,7)\\
   &14& - & 16& 7 & (0,5,1,7,7,2,0,7,4,6,0,3,7,6,7,0)\\\hline
   \end{tabular}&
   \begin{tabular}{c|c|c|c|c|c}
   p&q& $w_1$ &$w_2$ & $w_{3}$ & $a_n$ \\
  \hline\hline
   &5 & 2 & 6 & 2 & (0,2,1,2,0,0,2,1)\\
   &6 & 3 & 6 & 3 & (1,2,0,3,0,0,2,3,1)\\
   &7 & 4 &10 & 4 & (0,2,0,4,4,3,0,1,3,1)\\
   &8 & 5 & 10& 5 & (0,1,5,5,2,0,5,0,3,2,4)\\
  4&9 & 5 & 10& 6 & (0,2,0,4,5,0,6,6,5,1,3,1)\\
   &10& 6 & 10& 6 & (0,2,5,6,4,0,4,3,0,0,6,1,4\\
   &11& - & 12& 6 & (0,0,6,4,3,0,5,1,5,0,3,5,6,0)\\
   &12& - & 12& 7 & (0,0,2,5,0,6,4,1,0,7,3,7,0,0,2)\\
   &13& - & 16& 7 & (0,0,7,6,2,0,6,1,4,1,6,0,2,3,7,0)\\
   &14& - & 16& 8 & (0,6,6,8,2,1,6,4,0,3,7,0,7,8,5,0,0) \\
   &15& - & 16& 8 & (0,0,4,6,7,1,8,3,0,3,8,1,7,6,4,0,0,4) \\
   &16& - & 16& 9 & (0,0,3,8,9,1,0,8,5,0,9,2,6,0,7,9,7,3,3) \\
   \hline
   &6 & 4 & 6 & 4 & (0,3,1,2,4,4,1,4,2,3)\\
   &7 & 4 & 10 & 4 & (0,1,1,0,4,2,4,0,1,1,0)\\
   &8 & 5 & 10 & 6 & (0,5,4,1,3,6,0,0,1,6,5,2)\\
   &9 & 6 & 10 & 6 & (0,2,6,6,4,5,4,0,6,1,3,6,6)\\
  5&10& 8 & 10 & 7 & (0,4,6,0,5,0,7,3,2,0,0,1,4,6)\\
   &11& - & 12 & 7 & (0,3,7,5,6,1,6,5,7,3,0,0,3,7,5)\\
   &12& - & 12 & 8 & (0,0,6,4,8,2,3,5,0,5,8,1,0,0,7,5)\\
   &13& - & 16 & 8 & (0,4,0,1,7,5,8,3,3,8,5,7,1,0,4,0,1)\\
   &14& - & 16 & 9 & (0,4,9,0,6,1,8,9,2,2,0,9,6,9,5,4,8,0)\\
   &15& - & 16 &10 & (0,0,1,3,6,10,2,10,1,0,10,8,5,0,9,2,2,3,5)\\
   &16& - & 16 & 10 & (0,3,10,7,7,5,1,10,1,0,10,2,6,1,7,8,10,0,5,8)\\\hline
\end{tabular}
\end{tabular}
\end{table*}

Although Theorem~\ref{thm0} is useful when $I$ is an interval, we need criteria to check the girth of TG$({\cal H}(E))$ when the elements of $I$ are not consecutive. For this, we first need some definitions from \cite{origin}.
\begin{definition}
Corresponding to the matrix $E$ with SOE $\cal E$, ${\cal E}\subseteq I=\{i_0,i_1,\cdots,i_w\}$ and incidence matrices ${\cal M}_{i_j}$, $0\le j\le w$, the {\it representative block} matrix is defined as follows.
$$B_R(E)=\left(\begin{array}{ccccc}
                    {\cal M}_{i_w} &  {\cal M}_{i_{w-1}} & \ldots& {\cal M}_{i_0} \\
                    &{\cal M}_{i_w} &  \ldots& {\cal M}_{i_1} \\
                    &&\ddots & \vdots \\
                   &&&{\cal M}_{i_w}\\
                  \end{array}\right)$$
\end{definition}

Referring to Lemma 2 in \cite{origin}, it can be seen that ${\cal H}(E)$ is 4-cycle free if and only if $B_R(E)$ is 4-cycle free. Because, each combination of one, two, or four incidence matrices ${\cal M}_{i_j}$, $0\le j\le w$, that contain a 4-cycle in ${\cal H}(E)$ will appear in $B_R(E)$.

Because of the importance of the indices $i\in{\cal E}$ of ${\cal M}_{i}$ in the upper-right of $B_R(E)$, hereinafter we consider $B^{(I)}_R(E)$ to be the following matrix.
\begin{center}
\vspace{-.3cm}$B_R^{(I)}(E)=\left(\begin{array}{ccccc}
                    {i_w} &  {i_{w-1}} & \ldots& {i_0} \\
                    -1&{i_w} &  \ldots& {i_1} \\
                    \vdots&\vdots&\ddots & \vdots \\
                   -1&-1&\ldots&{i_w}\\
                  \end{array}\right)$
\end{center}
in which we consider the indices of the zero elements of $B_R^{(I)}(E)$ to be -1. Here, we give a theorem that connects the existence of 4-cycles in $B_R(E)$ with the existence of some square sub-matrices in $E$.
\begin{thm}
\label{thm2}
 The SC-LDPC code with PCM ${\cal H}(E)$ is 4-cycle free if and only if each $2\times 2$ sub-matrix of the upper-right corner of $B^{(I)}_R(E)$ and each of the square matrices $\left(\begin{array}{cc} a & a \\ b & b \end{array}\right)$ or
$\left(\begin{array}{cc} a & b \\ a & b \end{array}\right)$ $a,b\in{\cal E}$, does not appear in $E$.
 \end{thm}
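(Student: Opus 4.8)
The plan is to reduce the claim to the already-established equivalence ``${\cal H}(E)$ is 4-cycle free $\iff$ $B_R(E)$ is 4-cycle free'' (Lemma 2 of \cite{origin}), and then to classify the possible 4-cycles in $B_R(E)$ according to how many distinct incidence matrices ${\cal M}_{i_j}$ are involved, exactly as in the case analysis of Theorem~\ref{thm0}. A 4-cycle in the Tanner graph of $B_R(E)$ involves two rows and two columns of the block matrix; since each block is an incidence matrix, the four ``hit'' positions lie in (at most) four blocks ${\cal M}_a,{\cal M}_b,{\cal M}_c,{\cal M}_d$ whose block-indices occupy a $2\times 2$ pattern inside $B_R(E)$, i.e. inside $B_R^{(I)}(E)$ after relabelling zero blocks by $-1$. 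I would first record that a $2\times 2$ submatrix of $B_R^{(I)}(E)$ that contains a $-1$ cannot support a 4-cycle (a zero block has no edges), so only the $2\times2$ submatrices all of whose entries lie in ${\cal E}$ matter. Because of the staircase structure of $B_R(E)$, such a $2\times2$ all-in-${\cal E}$ pattern is (up to the ordering of rows/columns) of one of the shapes already met in Cases 1--3 of Theorem~\ref{thm0}: $\left(\begin{smallmatrix}a&a\\a&a\end{smallmatrix}\right)$, $\left(\begin{smallmatrix}a&a\\b&b\end{smallmatrix}\right)$, $\left(\begin{smallmatrix}a&b\\a&b\end{smallmatrix}\right)$, or a genuine four-distinct-values pattern $\left(\begin{smallmatrix}a&b\\c&d\end{smallmatrix}\right)$ with $a-b=c-d$.

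Next I would translate each shape into an occurrence of a forbidden configuration in $E$. For the one- and two-block shapes, a 4-cycle in ${\cal M}_a$, in $\left(\begin{smallmatrix}{\cal M}_a\\{\cal M}_b\end{smallmatrix}\right)$, or in $\left(\begin{smallmatrix}{\cal M}_a&{\cal M}_b\end{smallmatrix}\right)$ is, by the definitions of the incidence matrices, precisely the appearance in $E$ of $\left(\begin{smallmatrix}a&a\\a&a\end{smallmatrix}\right)$, $\left(\begin{smallmatrix}a&a\\b&b\end{smallmatrix}\right)$, or $\left(\begin{smallmatrix}a&b\\a&b\end{smallmatrix}\right)$ respectively --- this is the second family of forbidden matrices in the statement. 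For the four-distinct-values shape, the four incidence-matrix conditions ${\cal M}_a(i_1,j_1)={\cal M}_b(i_1,j_2)={\cal M}_c(i_2,j_1)={\cal M}_d(i_2,j_2)=1$ say exactly that $\left(\begin{smallmatrix}a&b\\c&d\end{smallmatrix}\right)$ appears as a $2\times2$ submatrix of $E$; and the block-index pattern of the four blocks forming the 4-cycle in $B_R(E)$ is a $2\times2$ submatrix of the upper-right corner of $B_R^{(I)}(E)$. Since in a staircase block matrix every $2\times2$ pattern of admissible (non-$-1$) indices can be shifted so that its entries read off consecutive anti-diagonals of $(i_w,i_{w-1},\dots,i_0)$, the condition ``no such $2\times2$ submatrix of the upper-right corner of $B_R^{(I)}(E)$ appears in $E$'' captures all four-distinct-values 4-cycles. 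Conversely, any appearance of a forbidden pattern in $E$ produces, via the same identifications read backwards, an explicit 4-cycle in the corresponding sub-block of $B_R(E)$, hence in ${\cal H}(E)$.

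Putting the two directions together gives the stated equivalence. The step I expect to be the main obstacle is the bookkeeping in the four-distinct-values case: one must check that the block-index pattern of \emph{any} 4-cycle of $B_R(E)$ really does land in the \emph{upper-right} corner of $B_R^{(I)}(E)$ (not merely somewhere in $B_R^{(I)}(E)$), using that $B_R(E)$ is block-upper-triangular with constant blocks along each (block-)diagonal, so that a cycle touching blocks in block-columns $s<t$ and block-rows $r<s'$ can be translated up-and-left until its north-east block sits in the top block-row; and symmetrically one must verify there is no loss of generality in assuming the value inequalities $b<a<d<c$ as in Case~3 of Theorem~\ref{thm0}. Once that normalization is in place, the rest is the routine dictionary between ``$1$ in ${\cal M}_e$ at $(i,j)$'' and ``$e_{i,j}=e$'' already used in the proof of Theorem~\ref{thm0}. $\square$
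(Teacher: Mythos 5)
Your proposal follows essentially the same route as the paper's proof: classify the 4-cycles of ${\cal H}(E)$ (equivalently of $B_R(E)$) by whether they occupy one, two, or four incidence-matrix blocks, match the first three cases to occurrences of $\left(\begin{smallmatrix}a&a\\b&b\end{smallmatrix}\right)$ or $\left(\begin{smallmatrix}a&b\\a&b\end{smallmatrix}\right)$ in $E$, and match the four-block case to a $2\times2$ sub-matrix of the upper-right corner of $B_R^{(I)}(E)$ appearing in $E$. The only slip is your passing claim that the four-distinct-values pattern satisfies $a-b=c-d$: since $I$ need not be an interval here, that arithmetic relation is not the right invariant, but this is harmless because the condition you actually use in the end (being a $2\times2$ sub-matrix of $B_R^{(I)}(E)$) is the correct one and is exactly what the paper uses.
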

 \begin{proof}
 Each cycle of length 4 in TG$({\cal H}(E))$ may be occurred in ${\cal M}_{i_j}$, for a $0\le j\le w$, or in the array $\left(\begin{array}{cc}{\cal M}_{i_j} & {\cal M}_{i_k}\end{array}\right)$, or its transpose, for some $0\le j\ne k\le w$, or in the array $\left(\begin{array}{cc}{\cal M}_{i_j} & {\cal M}_{i_k}\\ {\cal M}_{i_{j+h}}&{\cal M}_{i_{k+h}}\end{array}\right)$, for some $0\le k< j\le w$ and $1\le h\le w-j$. However, three first cases are not happen because $E$ does not contain the sub-matrix $\left(\begin{array}{cc} i_j & i_j \\ i_k & i_k \end{array}\right)$ or its transpose, for each $0\le j\ne k\le w$. Besides, the last case may not be considered because $\left(\begin{array}{cc}{i_j} & {i_k}\\ {i_{j+h}}&{i_{k+h}}\end{array}\right)$ is a sub-matrix of $B^{(I)}_R(E)$ which is not included in the matrix $E$.

 \end{proof}
 \begin{rem}
 It is noticed that the matrix $E$ satisfying Theorem \ref{thm2} is not necessary 4-cycle free, although, ${\cal H}(E)$ is 4-cycle free. For example, the following matrix with $I=\{0,2,3,4,5,6,7,8\}$ is a good candidate for Theorem \ref{thm2}, i.e. ${\cal H}(E)$ is 4-cycle free, while $E$ has some 4-cycles.
 $$ E=\left( \begin {array}{ccccccc}
  8&2&5&5&7&3&8\\
  8&0&8&6&0&7&0\\
  4&6&7&3&8&2&3
  \end {array} \right)$$
  For the case that $I$ is an interval containing consecutive elements, Theorem~\ref{thm2} is the same as Theorem~\ref{thm0}.
 \end{rem}
 \begin{figure}
 \centering
\includegraphics[scale=0.55]{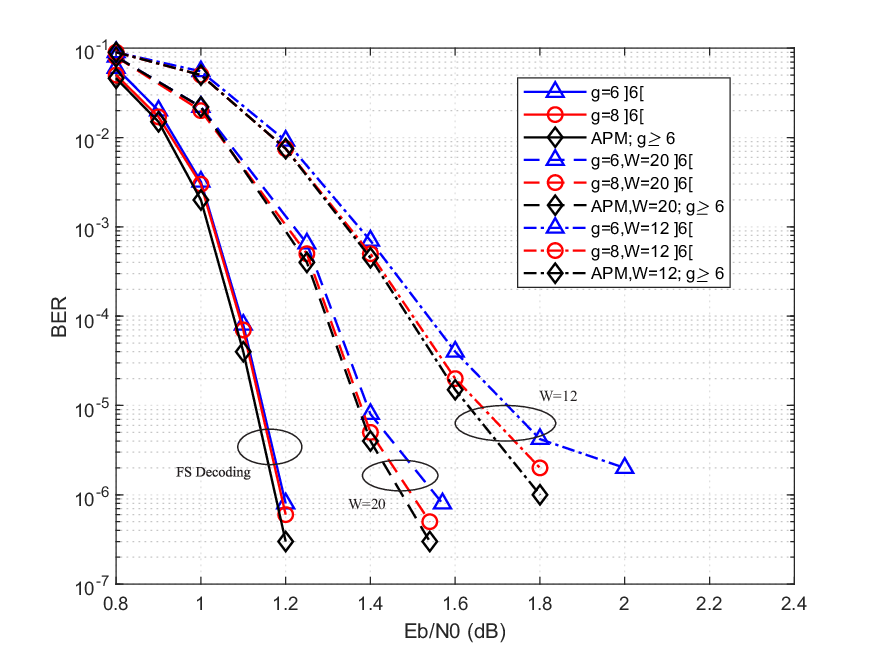}
\caption{BER performance of the constructed codes versus the codes in \cite{origin}\label{fig1}}\vspace{-.5cm}
\end{figure}

\section{APM-LDPC codes}
Let $m$ be a positive integer and $(s,a)\in\mathbb{Z}_m\times \mathbb{Z}_m^*$, for which $\mathbb{Z}_m=\{0,1,\cdots,m-1\}$ and $\mathbb{Z}_m^*=\{a\in\mathbb{Z}_m:\gcd(a,m)=1\}$. By the {\it affine permutation matrix (APM)} ${\cal I}^{s,a}$ with APM-size $m$~\cite{abedi}, we mean the permutation matrix $(e_{i,j})_{0\le i,j\le m-1}$, in which $e_{i,j}=1$ if and only if $i-ja=s\bmod m$. Especially, ${\cal I}^{s,1}$ is a {\it circulant permutation matrix (CPM)} and briefly is denoted by ${\cal I}^s$.
For $E$ in Eq. \ref{eq1} and coupling length $L=100$, ${\cal H}^{(L)}(E)$ can be considered as the base matrix of an APM SC-LDPC code with lifting degree $m=100$, rate $0.485$ and length 60000, after substituting each 0 and 1 element of the base matrix by the $m\times m$ zero matrix and an $m\times m$ APM matrix ${\cal I}^{s,a}$, respectively, such that $(s,a)\in\mathbb{Z}_m\times \mathbb{Z}_m^*$ are selected randomly such that the constructed code is 4-cycle free.

Finally, we have used the Sliding window (SW) decoding with different {\it window size} $W\in\{12,20\}$ and Standard flooding schedule (FS) decoding in Figure~\ref{fig1} to show the bit-error-rate performance of the constructed 4-cycle free APM-LDPC code against the QC SC-LDPC codes in~\cite{origin} with girths 6 and 8, where the maximum-iteration is 100. As the figure shows, the constructed APM SC-LDPC code outperform slightly the QC SC-LDPC codes in~\cite{origin}.




\end{document}